\newtheorem{theorem}{Theorem}
\newcommand{\blackslug}{\mbox{\hskip 1pt \vrule width 4pt height 8pt 
depth 1.5pt \hskip 1pt}}
\newcommand{\qed}{\quad\blackslug\lower 8.5pt\null\par\noindent}
\newenvironment{proof}{\par\noindent{\bf Proof:}}{\qed \par}
\title{Representing choice functions by a total hyper-order}
\author{Daniel Lehmann\\
The Rachel and Selim Benin School \\of Computer Science and Engineering, \\Hebrew University, 
\\Jerusalem 91904, Israel
\\lehmann@cs.huji.ac.il
}
\date{July 2021} 
\begin{document}
\maketitle

\begin{abstract}
Choice functions over a set $X$ that satisfy the Outcast, a.k.a. Aizerman, property are exactly 
those that attach to any set its maximal subset relative to some total order of ${2}^{X}$.
\end{abstract}

\section{Definitions}
A base set $X$, finite or infinite, is assumed.
In this paper we shall call a total order $\leq$ on ${2}^{X}$, also called a hyper-order, 
{\em a well-order} if there is no infinite {\em ascending} chain. 

Any total well-order $\leq$ defines a function 
\mbox{$f_{\leq} : {2}^{X} \longrightarrow {2}^{X}$}, by;
\[
f_{\leq} ( A ) = \max \{ B \subseteq X \mid B \subseteq A \}
\]
for any \mbox{$A \subseteq X$}.

A function \mbox{$f : {2}^{X} \longrightarrow {2}^{X}$} is said to be a {\em choice function}
iff \mbox{$f ( A ) \subseteq A$} for any \mbox{$A \subseteq X$}.
A choice function $f$ is said to satisfy {\em Outcast}, a.k.a. Aizerman~\cite{AizerMalish:81} 
iff, for any \mbox{$A , B \subseteq X$}
\[
f (A ) \subseteq B \subseteq A {\rm \ implies \ } f ( A ) = f ( B ).
\]

\section{Representation result}
\begin{theorem} \label{the:rep}
A function \mbox{$f : {2}^{X} \longrightarrow {2}^{X}$} is a choice function that satisfies
Outcast iff it is the function $f_{\leq}$ defined by some total well-order on ${2}^{X}$.
\end{theorem}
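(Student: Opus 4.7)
The proof splits naturally into a routine forward direction and a substantive reverse direction.

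For the forward direction, assuming $f = f_{\leq}$ for some total well-order, $f(A) \subseteq A$ by construction, so $f$ is a choice function. The maximum $f_{\leq}(A) = \max\{B \mid B \subseteq A\}$ exists because the family is nonempty (it contains $\emptyset$) and $\leq$ admits no infinite ascending chain. For Outcast, given $f(A) \subseteq B' \subseteq A$, the family $\{B \mid B \subseteq B'\}$ is a subfamily of $\{B \mid B \subseteq A\}$ containing $f(A)$; its maximum is therefore at least $f(A)$ but at most the maximum of the larger family, which equals $f(A)$. Hence $f_{\leq}(B') = f(A)$.

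For the reverse direction, I would first observe that applying Outcast with the witness taken to be $f(A)$ itself yields $f(f(A)) = f(A)$, so the image $R = \{f(A) \mid A \subseteq X\}$ is precisely the set of fixed points of $f$. The plan is to order $R$ by $C \preceq D$ iff $f(C \cup D) = D$, prove $\preceq$ is a total well-order for which $f(A)$ is always the $\preceq$-maximum element of $R$ contained in $A$, and then extend $\preceq$ to all of $2^X$ by placing each non-fixed-point set $B$ immediately below $f(B)$. Reflexivity and antisymmetry on $R$ are immediate. A useful companion observation is that $B \subseteq A$ implies $f(B) \preceq f(A)$: using the witness $f(A) \cup f(B) \subseteq A$ in Outcast yields $f(f(A) \cup f(B)) = f(A)$.

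The main obstacle will be proving that $\preceq$ is total and transitive on $R$ using Outcast alone. Totality asks that $f(C \cup D) \in \{C, D\}$ whenever $C, D \in R$; transitivity asks that $f(C \cup D) = D$ and $f(D \cup E) = E$ together force $f(C \cup E) = E$. My strategy would be to examine $f(C \cup D \cup E)$ and apply Outcast on the intermediate subfamilies $C \cup D \subseteq C \cup D \cup E$ and $D \cup E \subseteq C \cup D \cup E$, aiming to pin this value down as $E$, and then descend from $C \cup D \cup E$ to $C \cup E$ by a further Outcast application. Once $\preceq$ is known to be a total well-order on $R$, verifying the well-order property of the extended order on $2^X$ and checking that $f(A) = \max\{B \mid B \subseteq A\}$ should be comparatively routine.
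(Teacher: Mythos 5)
Your \emph{if} direction is correct and is essentially the paper's argument: $f_{\leq}(A)$ exists because $\leq$ has no infinite ascending chain, and for $f(A)\subseteq B'\subseteq A$ the maximum of $\{B\mid B\subseteq B'\}$ is squeezed between $f(A)$, which belongs to that family, and $f_{\leq}(A)=f(A)$. Your observations that the image of $f$ equals the set of fixed points and that $B\subseteq A$ implies $f(f(A)\cup f(B))=f(A)$ are also correct.

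The reverse direction, however, breaks at exactly the step you flag as the main obstacle, and the obstacle cannot be overcome as stated: the relation $C\preceq D$ iff $f(C\cup D)=D$ is \emph{not} total on the set $R$ of fixed points. Take $X=\{1,2,3\}$ and let $f$ be the identity on every subset except $f(\{1,2,3\})=\{1,3\}$. This satisfies Outcast: the only nontrivial instance is $f(\{1,2,3\})=\{1,3\}\subseteq B\subseteq\{1,2,3\}$, which forces $B\in\{\{1,3\},\{1,2,3\}\}$, and both of these have image $\{1,3\}$. Now $C=\{1,2\}$ and $D=\{2,3\}$ are fixed points, yet $f(C\cup D)=\{1,3\}$ is neither $C$ nor $D$, so $C$ and $D$ are $\preceq$-incomparable. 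The structural reason is that Outcast constrains $f(C\cup D)$ only on sets squeezed between $f(C\cup D)$ and $C\cup D$; it forces $f(C\cup D)\in\{C,D\}$ only when $f(C\cup D)$ happens to be contained in $C$ or in $D$, which need not occur. So your plan stalls at a claim that is false, not merely hard.

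The repair, and the point where the paper diverges from you, is to give up on a total order directly revealed by $f$ and instead take a partial constraint on the fixed points, extend it by the order-extension principle to a total well-order on $R$, and only then stack each fiber $f^{-1}(A)$ just below its fixed point $A$ (your ``immediately below $f(B)$'' step, which matches the paper's treatment of domains). The paper extends the inclusion order on fixed points; the constraint your final order actually must respect is precisely your companion observation, namely $f(B)\preceq f(A)$ whenever $B\subseteq A$, i.e.\ the relation $C\prec D$ when $C\neq D$ and $f(C\cup D)=D$ restricted to such pairs. Your proof obligation therefore shifts from totality and transitivity of $\preceq$ (the former being false) to showing that this partial constraint is acyclic, hence extendable to a total well-order. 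Without that change of strategy the argument cannot be completed.
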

\begin{proof}
For the {\em if} direction, note that, by definition, \mbox{$f_{\leq} ( A) ) \subseteq A$}
and that \mbox{$B \subseteq A$} implies \mbox{$f_{\leq} ( B ) \leq f_{\leq} ( A ) $}, but
\mbox{$ f_{ \leq } ( A ) \subseteq B$} implies 
\mbox{$ f_{\leq} ( A )  \leq f_{\leq} ( B)$}.

For the {\em only if} direction, assume $f$ is a choice function that satisfies Outcast.
We say that \mbox{$A \subseteq X$} is a {\em fixpoint} if \mbox{$f ( A ) = A$}.
By Outcast, for any $A$, $f ( A ) $ is a fixpoint.
To any fixpoint $A$, we associate its {\em domain} 
\mbox{$\bar{A} = \{ B \subseteq X \mid f ( B ) = A \}$}. 
Note that if \mbox{$C \in \bar{A} \cap \bar{B}$}, then \mbox{$f ( C ) = A$}, 
\mbox{$f ( C ) = B$} and \mbox{$A = B$}.
We see that $A$ is the only fixpoint in $\bar{A}$.
For any domain $D$ let $D_{h}$ be the only fixpoint in $D$.
We can therefore order any domain $D$ by a well-order $\leq_{D}$ such that 
\mbox{$D_{h} = \max ( D ) $}.
Let $\cal F$ be the set all fixpoints of $f$.
The set $\cal F$ is equipped by the following partial order:
\mbox{$A \leq B$} iff \mbox{$A \subseteq B$}.
We can extend it to a total well-order on $\cal F$, denoted also $\leq$.
The total well-orders $\leq$ on $\cal F$ and $\leq_{D}$ on the different domains
to a total well-order on ${2}^{X}$:
\begin{itemize}
\item if \mbox{$D_{ f ( B ) } < D_{ f ( A ) }$} we put \mbox{$B \leq A$},
\item if \mbox{$D_{ f ( A ) } < D_{ f ( B ) }$} we put \mbox{$A \leq B$},
\item if \mbox{$\overline { f ( A ) } = \overline { f ( B ) } $} 
\begin{itemize}
\item if \mbox{$A \leq_{\overline{ f ( A ) }} B$}, we put \mbox{$A \leq B$}, and
\item if \mbox{$B \leq_{\overline{ f ( A ) }} A$}, we put \mbox{$B \leq A$}.
\end{itemize}
\end{itemize}
The resulting relation $\leq$ is a total well-order.
Note that, for every \mbox{$A \subseteq X$}, \mbox{$A \leq f ( A )$} since
\mbox{$f ( f ( A ) ) = f ( A )$}.
Note also that for any \mbox{$ B \subseteq A \subseteq X$} 
one has \mbox{$ B \leq f ( A ) $}:
indeed 
\begin{itemize}
\item if \mbox{$ f ( B ) \leq f ( A )$}, \mbox{$ B \leq f ( B ) \leq f ( A )$},
\item if \mbox{$ f ( B ) \not \leq f ( A ) $}, then, by the definition of $\leq$ on $\cal F$,  
\mbox{$ B \not \subseteq A$}, a contradiction.
\end{itemize}
 We conclude that \mbox{$f ( A ) = f_{\leq} ( A ) $}.
\end{proof}

\bibliographystyle{plain}

\end{document}